
\documentclass[conference]{IEEEtran}
\ifCLASSINFOpdf
  \usepackage[pdftex]{graphicx}
  % declare the path(s) where your graphic files are
  % \graphicspath{{../pdf/}{../jpeg/}}
  % and their extensions so you won't have to specify these with
  % every instance of \includegraphics
  % \DeclareGraphicsExtensions{.pdf, .jpeg, .png}
\else
  % or other class option (dvipsone,  dvipdf,  if not using dvips). graphicx
  % will default to the driver specified in the system graphics.cfg if no
  % driver is specified.
  % \usepackage[dvips]{graphicx}
  % declare the path(s) where your graphic files are
  % \graphicspath{{../eps/}}
  % and their extensions so you won't have to specify these with
  % every instance of \includegraphics
  % \DeclareGraphicsExtensions{.eps}
\fi
% graphicx was written by David Carlisle and Sebastian Rahtz. It is
% required if you want graphics,  photos,  etc. graphicx.sty is already
% installed on most LaTeX systems. The latest version and documentation
% can be obtained at:
% http://www.ctan.org/pkg/graphicx
% Another good source of documentation is "Using Imported Graphics in
% LaTeX2e" by Keith Reckdahl which can be found at:
% http://www.ctan.org/pkg/epslatex
%
% latex,  and pdflatex in dvi mode,  support graphics in encapsulated
% postscript (.eps) format. pdflatex in pdf mode supports graphics
% in .pdf,  .jpeg,  .png and .mps (metapost) formats. Users should ensure
% that all non-photo figures use a vector format (.eps,  .pdf,  .mps) and
% not a bitmapped formats (.jpeg,  .png). The IEEE frowns on bitmapped formats
% which can result in "jaggedy"/blurry rendering of lines and letters as
% well as large increases in file sizes.
%
% You can find documentation about the pdfTeX application at:
% http://www.tug.org/applications/pdftex

% *** MATH PACKAGES ***
%
\usepackage{amsmath,amssymb,amsfonts,mathrsfs}
\usepackage{balance}
\usepackage{setspace}
\usepackage{cite}
\usepackage{amsthm}
\usepackage{graphicx}
\usepackage{epstopdf}
\usepackage{caption}
\usepackage{algorithm}

% A popular package from the American Mathematical Society that provides
% many useful and powerful commands for dealing with mathematics.
%
% Note that the amsmath package sets \interdisplaylinepenalty to 10000
% thus preventing page breaks from occurring within multiline equations. Use:
%\interdisplaylinepenalty=2500
% after loading amsmath to restore such page breaks as IEEEtran.cls normally
% does. amsmath.sty is already installed on most LaTeX systems. The latest
% version and documentation can be obtained at:
% http://www.ctan.org/pkg/amsmath
\usepackage{bbm}

\usepackage{amsfonts}
\usepackage{bm}
\usepackage{geometry}
\geometry{top=0.75in,left=0.625in,right=0.625in,bottom=1in}
% \addtolength{\topmargin}{0.1in}
\columnsep 0.23in
% *** SPECIALIZED LIST PACKAGES ***
%
%\usepackage{algorithmic}
% algorithmic.sty was written by Peter Williams and Rogerio Brito.
% This package provides an algorithmic environment fo describing algorithms.
% You can use the algorithmic environment in-text or within a figure
% environment to provide for a floating algorithm. Do NOT use the algorithm
% floating environment provided by algorithm.sty (by the same authors) or
% algorithm2e.sty (by Christophe Fiorio) as the IEEE does not use dedicated
% algorithm float types and packages that provide these will not provide
% correct IEEE style captions. The latest version and documentation of
% algorithmic.sty can be obtained at:
% http://www.ctan.org/pkg/algorithms
% Also of interest may be the (relatively newer and more customizable)
% algorithmicx.sty package by Szasz Janos:
% http://www.ctan.org/pkg/algorithmicx

% *** ALIGNMENT PACKAGES ***
%
\usepackage{array}
\hyphenation{op-tical net-works semi-conduc-tor}

\IEEEoverridecommandlockouts

\begin{document}
%
% paper title
% Titles are generally capitalized except for words such as a, an, and, as,
% at, but, by, for, in, nor, of, on, or, the, to and up, which are usually
% not capitalized unless they are the first or last word of the title.
% Linebreaks \\ can be used within to get better formatting as desired.
% Do not put math or special symbols in the title.
\title{Deterministic Scheduling for Low-latency Wireless Transmissions with Continuous Channel States}

% author names and affiliations
% use a multiple column layout for up to three different
% affiliations
\author{\IEEEauthorblockN{Junjie Wu,\textit{ Student Member,~IEEE}, Wei Chen,\textit{ Senior Member,~IEEE}}
\IEEEauthorblockA{Department of Electronic Engineering, Tsinghua University, Beijing, 100084, CHINA\\
Beijing National Research Center for Information Science and Technology\\
Email: wujj18@mails.tsinghua.edu.cn, wchen@tsinghua.edu.cn}	\thanks{This research is supported by the National Program for Special Support for Eminent Professionals of China (National 10,000-Talent Program), the Beijing Natural Science Foundation under Grant No. 4191001, and the National Science Foundation of China under Grant No. 61671269.} }

% conference papers do not typically use \thanks and this command
% is locked out in conference mode. If really needed, such as for
% the acknowledgment of grants, issue a \IEEEoverridecommandlockouts
% after \documentclass

% for over three affiliations, or if they all won't fit within the width
% of the page, use this alternative format:
%
%\author{\IEEEauthorblockN{Michael Shell\IEEEauthorrefmark{1},
%Homer Simpson\IEEEauthorrefmark{2},
%James Kirk\IEEEauthorrefmark{3},
%Montgomery Scott\IEEEauthorrefmark{3} and
%Eldon Tyrell\IEEEauthorrefmark{4}}
%\IEEEauthorblockA{\IEEEauthorrefmark{1}School of Electrical and Computer Engineering\\
%Georgia Institute of Technology,
%Atlanta, Georgia 30332--0250\\ Email: see http://www.michaelshell.org/contact.html}
%\IEEEauthorblockA{\IEEEauthorrefmark{2}Twentieth Century Fox, Springfield, USA\\
%Email: homer@thesimpsons.com}
%\IEEEauthorblockA{\IEEEauthorrefmark{3}Starfleet Academy, San Francisco, California 96678-2391\\
%Telephone: (800) 555--1212, Fax: (888) 555--1212}
%\IEEEauthorblockA{\IEEEauthorrefmark{4}Tyrell Inc., 123 Replicant Street, Los Angeles, California 90210--4321}}

% use for special paper notices
%\IEEEspecialpapernotice{(Invited Paper)}

% make the title area
\maketitle

% As a general rule, do not put math, special symbols or citations
\begin{abstract}
  High energy efficiency and low latency have always been the significant goals pursued by the designer of wireless networks. One efficient way to achieve these goals is cross-layer scheduling based on the system states in different layers, such as queuing state and channel state. However, most existing works in cross-layer design focus on the scheduling based on the discrete channel state. Little attention is paid to considering the continuous channel state that is closer to the practical scenario. Therefore, in this paper, we study the optimal cross-layer scheduling policy on data transmission in a single communication link with continuous state-space of channel. The aim of scheduling is to minimize the average power for data transmission when the average delay is constrained. More specifically, the optimal cross-layer scheduling problem was formulated as a variational problem. Based on the variational problem, we show the optimality of the deterministic scheduling policy through a constructive proof. The optimality of the deterministic policy allows us to compress the searching space from the probabilistic policies to the deterministic policies.
\end{abstract}

% no keywords
% For peer review papers, you can put extra information on the cover
% page as needed:
% \ifCLASSOPTIONpeerreview
% \begin{center} \bfseries EDICS Category: 3-BBND \end{center}
% \fi
% For peerreview papers, this IEEEtran command inserts a page break and
% creates the second title. It will be ignored for other modes.
\IEEEpeerreviewmaketitle

\section{Introduction}
With the growth of the Internet of Things (IoT), billions of devices are connected through wireless networks, which puts great challenges in satisfying the strict quality of service (QoS). Among the strict QoS, high energy efficiency and ultra-low latency are the important goals in the present 5G and future 6G [1, 2]. In particular, to achieve high energy efficiency and ultra-low latency simultaneously, the cross-layer design is a significant way. To this end, in our work, we focus on minimizing average power for data transmission with the constraint of average delay through the cross-layer design.

In existing works [3-9], the cross-layer design has been widely studied and implemented as a significant way to satisfy the strict QoS through scheduling and resource allocation. More specifically, works [3-6] focused on the optimal scheduling of data transmission under the cross-layer framework. In particular, the Collins and Cruz proposed the pioneering work [3] on cross-layer design to minimize average power for data transmission over a two-state fading channel. In [4], through cross-layer design, Berry and Gallager studied the optimal delay-power tradeoff in the regime of asymptotically large delay. After this, Berry also studied the optimal delay-power tradeoff in the regime of asymptotically small delay in [5]. Authors in [6] proposed a lazy scheduling algorithm to achieve energy-efficient transmission. There are also some works focused on the resource allocation and management with cross-layer design [7-9]. In [7], the power and bandwidth allocation policy were studied in the scenario of Ultra-Reliable and Low-Latency (URLLC) to minimize the power for data transmission under the QoS constraint. The optimal energy allocation was studied in [8] to maximize the throughput with the consideration of energy harvesting. In [9], the energy allocation and management were studied in satellites communications through a dynamic programming approach.

In our previous work [10-12], we studied the optimal delay-power tradeoff and its corresponding scheduling policy through the cross-layer design. More particularly, we first proposed the probabilistic scheduling to minimize average delay under the constraint of average power in a simple scenario with the two-state block-fading channel and fixed modulation [10]. Based on the probabilistic scheduling proposed in [10], in the follow-up works, we considered the optimal delay-power tradeoff in more generalization and complex scenarios. We considered  buffer aware scheduling with adaptive transmission in [11] and multi-state block-fading channel in [12]. However, in the practical scenario, the channel energy gain fluctuates in a continuous range. If we model the state of channel as multiple discrete states, the practical channel state information will not be accurately described.

To make full use of channel state information, in this paper, we study the optimal cross-layer scheduling based on the continuous channel states. Fortunately, the optimal scheduling can be converted to a variational problem. Based on this, we construct a solution within the feasible region of the variational problem. The scheduling policy corresponding to the constructed solution is shown to be the deterministic policy. Meanwhile, the average power corresponding to the constructed solution is shown to approximate the achievable minimum power with an arbitrarily small error. Therefore, based on the constructed solution, we finally show the optimality of deterministic cross-layer scheduling policy. This conclusion greatly compresses the search space for finding the optimal policy.

\section{System Model}

Consider a communication link that a transmitter sends data packets to its receiver over a block fading channel as shown in Fig. 1. The data transmission is scheduled according to the information of the discrete data queue state and continuous channel state under the cross-layer design. The aim of scheduling is to minimize average power for data transmission with the constraint of average delay.

In the network layer, data packets arrive at the transmitter randomly at the beginning of each time slot. Let $a[n]\in \mathcal{A}$ denote the number of data packets arriving in time slot $n$, where $\mathcal{A}=\{0, 1, 2, ..., A\}$. In particular, the parameter $A$ is the maximum number of data packet that can be arrived in one time slot. To capture the practical scenario, we allow that $a[n]$  could follow any discrete probability distribution. Therefore, the probability distribution of $a[n]$ can be given by
 \begin{equation}
\Pr\{a[n]=k\}=\alpha_{k},
 \end{equation}
 where $\alpha_{k}\geq0$ and $\sum_{k=0}^{A}\alpha_{k}=1$. Thus, the average arrival rate of data packets $\overline{a}$ can be obtained as
 \begin{equation}
\overline{a}=\sum_{k=0}^{A}k\alpha_{k}.
 \end{equation}

 In the data link layer, the arrived data packets will be backlogged in the data queue if they are not sent out immediately. We let $Q$ denote the buffer size of data queue. It means that if the number of packets in the data queue exceeds $Q$, the extra packets will be lost. Let $q[n]\in \mathcal{Q}$ denote the data queue length at the beginning of time slot $n$, where $\mathcal{Q}$ is the discrete space that can be expressed as $\{ 1, 2, ..., Q\}$. We implement $q[n]$ to indicate the state of data queue for scheduling of data transmission. The update of $q[n]$ is given by
  \begin{equation}
   q[n+1]=\min\{\max\{q[n]-s[n], 0\}+a[n+1],Q\},
 \end{equation}
 where $s[n]\in\mathcal{S}$ is the number of packets transmitted in time slot $n$, and $\mathcal{S}=\{1, 2, ..., S_{max}\}$. The $S_{max}$ indicates the maximum number of data packets that can be transmitted in one time slot corresponding to the transmitter's capacity for data transmission. We assume that $S_{max}\geq A$ to avoid packet loss caused by overflow of data queue.

In the physical layer, the data packets are transmitted over the block fading channel. We assume that channel energy gain stays invariant during each time slot, and follows an independent and identically distributed ($i.i.d.$) fading process across the time slots. In practice, the channel energy gain is a value that fluctuates in a continuous interval. Let $h[n]$ denote the channel energy gain in time slot $n$. To capture the practical scenario without loss of generality, we assume that $h[n]\in \mathcal{H}$ obeys a probability density function $f_{H}(h)$, where $\mathcal{H}=(h_{min}, h_{max}]$, and $f_{H}(h)$ is a continuous bounded function on $\mathcal{H}$. Therefore, the probability distribution of $h[n]$ can be given by
 \begin{equation}
\Pr\{h[n]\leq x\}=\int_{h_{min}}^{x} f_{H}(h)  dh.
 \end{equation}
  \begin{figure}
   \centering
   \includegraphics[width=0.5\textwidth]{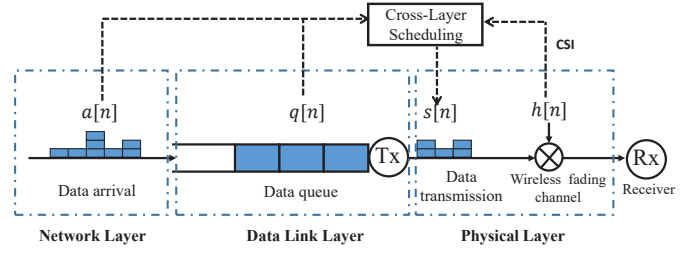}
   \caption{System Model.}
   \label{fig.pathdemo4}
   \end{figure}
The energy consumption for data transmission in time slot $n$ is determined by the data transmission rate $s[n]$ and channel energy gain $h[n]$. Let $\xi(s)h^{-1}$ denote the energy consumption in one time slot when transmission rate is $s$ and the channel energy gain is $h$. The $\xi(s)$ is assumed to be a strictly monotonic increasing function, since more data transmission requires more energy. For example, $\xi(s)=2^{s}-1$ is widely used, which can be derived based on channel capacity.

Our aim is to minimize the average power for data transmission while satisfying that the average delay is below a certain threshold $D_{th}$.

\section{The Deterministic Policy is Optimal}
In this section, we first present the probabilistic scheduling that can cover all feasible stationary policies for data transmission. Then the minimization of average power for data transmission can be formulated as a variational problem. Based on this variational problem, we show the optimality of deterministic policy through a constructive proof.

The probabilistic scheduling can be expressed by a set of scheduling parameters $\{f(s|q,h)\}$. The parameter $f(s|q,h)$ denotes the probability of transmitting $s$ data packets when the data queue length is $q$ and channel energy gain is $h$. Let $g(q, s, h)$ represent the steady state probability density of the system being in a state that the data queue length is $ q $, the transmission rate is $ s$, and the channel energy gain is $h$.

Based on the definition of $g(q, s, h)$, the scheduling parameters $\{f(s|q,h)\}$ can be represented by $g (q, s, h)$ as follows,
 \begin{equation}
 f(s|q,h)=\frac{g (q, s, h)}{\sum_{x=0}^{S_{max}}g (q, x, h)}.
 \end{equation}
 Furthermore, the average delay $D_{ave}$ and average power $P_{ave}$ can also be represented by $g (q, s, h)$. The average delay is associated with average data queue length. The average data queue length can be obtained as
 \begin{equation}
\mathbb{E}\{q[n]\}=\int_{h_{min}}^{h_{max}}\sum_{s=0}^{S_{max}}\sum_{q=0}^{Q}q g(q, s, h)dh.
\end{equation}
According to Little's Law, the expression of $D_{ave}$ can be obtained as
 \begin{equation}
D_{ave}=\frac{1}{\overline{a}}\int_{h_{min}}^{h_{max}}\sum_{s=0}^{S_{max}}\sum_{q=0}^{Q}qg(q, s, h)dh.
\end{equation}
The average power for data transmission is associated with transmission rate and channel energy gain. Thus, the expression of $P_{ave}$ can be obtained as
 \begin{equation}
 P_{ave}=\int_{h_{min}}^{h_{max}}\sum_{s=0}^{S_{max}}\sum_{q=0}^{Q}\xi(s)h^{-1}g(q, s, h)dh.
\end{equation}

In Eqs. (5), (7), and (8), the probabilistic scheduling parameters $\{f(s|q,h)\}$, the average delay $D_{ave}$, and the average power $P_{ave}$ are all represented by steady state probability density function $g (q, s, h)$, respectively. This allows us to use $g (q, s, h)$ as the optimization variable to minimize the average power as follows,
\begin{subequations} \label{10151408}
\renewcommand{\theequation}{\theparentequation.\alph{equation}}
\begin{align}
\min_{g(q,s,h)}\ & ~  \int_{h_{min}}^{h_{max}}\sum_{s=0}^{S_{max}}\sum_{q=0}^{Q}\xi(s)h^{-1}g(q, s, h)dh\\
 s.t.\ \ &~ \frac{1}{\overline{a}}\int_{h_{min}}^{h_{max}}\sum_{s=0}^{S_{max}}\sum_{q=0}^{Q}qg(q, s, h)dh\leq D_{th} \label{b} \\
& ~\sum_{s=0}^{S_{max}}\sum_{q=0}^{Q}g(q, s, h)=f_{H}(h)\ \ \ \forall h\in \mathcal{H}\label{c}\\
& ~\sum_{k=0}^{A}\sum_{s=0}^{S_{max}}\alpha_{k}\int_{h_{min}}^{h_{max}}g(q+s-k, s, h)dh=\notag\\ & ~ \sum_{s=0}^{S_{max}}\int_{h_{min}}^{h_{max}}g(q, s, h)dh\ \ \ \forall q\in \mathcal{Q}\label{c}\\
& ~ g(q,s,h)\geq 0 \ \ \forall q\in \mathcal{Q}, s\in \mathcal{S}, h\in \mathcal{H}\\
& ~ g(q,s,h)=0 \ \ \ \forall q\!-\!s<0\notag \\ & ~  \ \text{or}\ q\!-\!s>Q\!-\!A \ \text{or}\ q \notin \mathcal{Q}\ \text{or}\ s \notin \mathcal{S}.
\end{align}%
\end{subequations}

In problem (9), the objective function (9.a) is the average power, as shown in Eq. (8). The first constraint (9.b) is the constraint of average delay, as shown in Eq. (7). The second constraint (9.c) is the constraint that channel energy gain obeys the probability density function $f_{H}(h)$. The constraint (9.d) represents the state balance equation of the data queue state. The left side of the equation represents the summation of probability that the data queue state are transferred to the queue state $q$ at the next time slot, while the right side of the equation is the steady-state probability that the system is in the queue state $q$. Constraint (9.e) is the non-negative condition of probability. Finally, constraint (9.f) avoids the overflow or underflow of the data queue. However, problem (9) is a variational problem, since its optimization variable $g (q, s, h)$ is a probability density function.

Though, it is not a trivial work to solve problem (9), we may show a light on the structural properties of its optimal solution. Let $g ^{*}(q, s, h)$ denote the optimal solution of the problem (9). Based on $g ^{*}(q, s, h)$, we define a set of functions $\{U_{q}(x)\}$ as follows,
 \begin{equation}
U_{q}(x)=\int_{h_{min}}^{x}\sum_{s=0}^{S_{max}}g ^{*}(q, s, h)dh,
 \end{equation}
where $q\in\mathcal{Q}$. Then, we construct a solution $y_{M}(q, s, h)$ as follows
 \begin{equation}
y_{M}(q, s, h)=\sum_{k=1}^{M}\sum_{x=0}^{S_{max}}g ^{*}(q, x, h)\mathbb{I}\{h\in(h_{k,s,q}^{min}, h_{k,s,q}^{max}]\},
 \end{equation}
where $M\in\mathbb{N}$, $q\in\mathcal{Q}$, $h\in \mathcal{H}$. The definition of $h_{k,s,q}^{max}$ in Eq. (11) is given by
 \begin{equation}
 \begin{aligned}
 & ~h_{k,s,q}^{max}=U_{q}^{-1}(\int_{h_{min}}^{h_{k}^{min}}\sum_{x=0}^{S_{max}}g ^{*}(q, x, h)dh \\ & ~+\int_{h_{k}^{min}}^{h_{k}^{max}}\sum_{x=0}^{s}g ^{*}(q, x, h)dh),
  \end{aligned}
 \end{equation}
 where $U_{q}^{-1}(.)$ is the inverse of $U_{q}(.)$, and $h_{k}^{min}=h_{min}+\frac{(k-1)(h_{max}-h_{min})}{M}$, $h_{k}^{max}=h_{min}+\frac{k(h_{max}-h_{min})}{M}$. We also define $h_{k,-1,q}^{max}=h_{k}^{min}$. Then, the definition of $h_{k,s,q}^{min}$ in Eq. (11) is given by
 \begin{equation}
h_{k,s,q}^{min}=h_{k,s-1,q}^{max}, \ \ \ \forall \ \ 0\leq s \leq S_{max}.
 \end{equation}

 For the constructed solution $y_{M}(q, s, h)$ in Eq. (11), we first show that $y_{M}(q, s, h)$ is in the feasible region of problem (9). On this basis, we show that the scheduling policy corresponding to $y_{M}(q, s, h)$ is the deterministic policy. Finally, we show that the average power corresponding to $y_{M}(q, s, h)$ can approximate that of the optimal solution $g ^{*}(q, s, h)$ with an arbitrarily small error for the sufficiently large $M$. To sum up, the optimality of the deterministic scheduling policy can be shown.
   \newtheorem{lem}{\bf Lemma}
   \begin{lem}\label{lem1}
   The constructed solution $y_{M}(q, s, h)$ in Eq. (11) satisfies the constraint (9.c), i.e.,
     \begin{equation}
    \sum_{s=0}^{S_{max}}\sum_{q=0}^{Q}y_{M}(q, s, h)=f_{H}(h)\ \ \ \forall h\in \mathcal{H}.
     \end{equation}
     \end{lem}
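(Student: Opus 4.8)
The plan is to prove the identity pointwise in $h$ by first establishing, for each fixed queue length $q\in\mathcal{Q}$, the intermediate identity $\sum_{s=0}^{S_{max}}y_{M}(q,s,h)=\sum_{x=0}^{S_{max}}g^{*}(q,x,h)$ for every $h\in\mathcal{H}$, and then summing over $q$ and invoking constraint (9.c) for $g^{*}$. Writing $G_{q}(h):=\sum_{x=0}^{S_{max}}g^{*}(q,x,h)$, observe that by construction $y_{M}(q,s,h)=G_{q}(h)\sum_{k=1}^{M}\mathbb{I}\{h\in(h_{k,s,q}^{min},h_{k,s,q}^{max}]\}$, so the intermediate identity reduces to showing that, for each fixed $k\in\{1,\dots,M\}$, the family of half-open intervals $\{(h_{k,s,q}^{min},h_{k,s,q}^{max}]\}_{s=0}^{S_{max}}$ partitions $(h_{k}^{min},h_{k}^{max}]$.

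To prove that partition claim I would argue as follows. Since $G_{q}\ge 0$, the function $U_{q}(x)=\int_{h_{min}}^{x}G_{q}(h)\,dh$ is nondecreasing, so its inverse is order-preserving on the range of $U_{q}$; hence $h_{k}^{min}\le h_{k,0,q}^{max}\le h_{k,1,q}^{max}\le\cdots\le h_{k,S_{max},q}^{max}$, because the argument of $U_{q}^{-1}$ in (12), namely $\int_{h_{min}}^{h_{k}^{min}}G_{q}\,dh+\int_{h_{k}^{min}}^{h_{k}^{max}}\sum_{x=0}^{s}g^{*}(q,x,h)\,dh$, is nondecreasing in $s$. Combining this with the chaining rule $h_{k,s,q}^{min}=h_{k,s-1,q}^{max}$ of Eq. (14) and the seed $h_{k,-1,q}^{max}=h_{k}^{min}$ shows the intervals are consecutive and non-overlapping, with overall left endpoint $h_{k,0,q}^{min}=h_{k}^{min}$. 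For the right endpoint, when $s=S_{max}$ the argument of $U_{q}^{-1}$ telescopes to $\int_{h_{min}}^{h_{k}^{min}}G_{q}\,dh+\int_{h_{k}^{min}}^{h_{k}^{max}}G_{q}\,dh=\int_{h_{min}}^{h_{k}^{max}}G_{q}\,dh=U_{q}(h_{k}^{max})$, so $h_{k,S_{max},q}^{max}=U_{q}^{-1}(U_{q}(h_{k}^{max}))=h_{k}^{max}$. Thus the intervals tile $(h_{k}^{min},h_{k}^{max}]$ exactly.

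With the partition established, for every $h\in\mathcal{H}$ we get $\sum_{s=0}^{S_{max}}\mathbb{I}\{h\in(h_{k,s,q}^{min},h_{k,s,q}^{max}]\}=\mathbb{I}\{h\in(h_{k}^{min},h_{k}^{max}]\}$, and since the intervals $(h_{k}^{min},h_{k}^{max}]$ for $k=1,\dots,M$ themselves tile $(h_{min},h_{max}]=\mathcal{H}$, summing over $k$ gives $\sum_{k=1}^{M}\sum_{s=0}^{S_{max}}\mathbb{I}\{h\in(h_{k,s,q}^{min},h_{k,s,q}^{max}]\}=1$. Hence $\sum_{s=0}^{S_{max}}y_{M}(q,s,h)=G_{q}(h)=\sum_{x=0}^{S_{max}}g^{*}(q,x,h)$, and summing over $q\in\mathcal{Q}$ and applying (9.c) to $g^{*}$ yields $\sum_{q=0}^{Q}\sum_{s=0}^{S_{max}}y_{M}(q,s,h)=\sum_{q=0}^{Q}\sum_{x=0}^{S_{max}}g^{*}(q,x,h)=f_{H}(h)$. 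The only delicate point---the main obstacle---is the well-definedness and monotonicity of $U_{q}^{-1}$ when $G_{q}$ vanishes on sets of positive measure, so that $U_{q}$ is merely nondecreasing rather than strictly increasing. I would handle this either by using the generalized inverse $U_{q}^{-1}(y)=\inf\{x:U_{q}(x)\ge y\}$ and checking that all the identities above still hold (almost everywhere in $h$), or by restricting attention to the support of $G_{q}$ and noting that off this support both $y_{M}(q,\cdot,h)$ and $g^{*}(q,\cdot,h)$ vanish, so the identity holds trivially there.
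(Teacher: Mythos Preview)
Your proposal is correct and follows essentially the same route as the paper: both compute $h_{k,S_{max},q}^{max}=h_k^{max}$ via the telescoping argument, conclude that the intervals $(h_{k,s,q}^{min},h_{k,s,q}^{max}]$ over $k,s$ tile $(h_{min},h_{max}]$, and then reduce the double sum to $\sum_{q}\sum_{x} g^{*}(q,x,h)=f_{H}(h)$ via constraint (9.c) for $g^{*}$. Your treatment is in fact slightly more careful than the paper's---you explicitly invoke monotonicity of $U_q^{-1}$ to justify non-overlap and flag the invertibility issue when $G_q$ vanishes, whereas the paper leaves these implicit---but the underlying strategy is identical.
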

 \begin{proof}
 Based on Eq. (12), we obtain the $h_{k,S_{max},q}^{max}$ given by
  \begin{equation}
 \begin{aligned}
  h_{k,S_{max},q}^{max}&=U_{q}^{-1}(\int_{h_{min}}^{h_{k}^{min}}\sum_{x=0}^{S_{max}}g ^{*}(q, x, h)dh \\ & ~+\int_{h_{k}^{min}}^{h_{k}^{max}}\sum_{x=0}^{S_{max}}g ^{*}(q, x, h)dh)\\
              &=U_{q}^{-1}(\int_{h_{min}}^{h_{k}^{max}}\sum_{x=0}^{S_{max}}g ^{*}(q, x, h)dh)\\
              &=h_{k}^{max}.
   \end{aligned}
   \end{equation}

   Therefore, we have
    \begin{equation}
 \begin{aligned}
  \bigcup_{k=1}^{M}\bigcup_{s=0}^{S_{max}} (h_{k,s,q}^{min}, h_{k,s,q}^{max}]&=\bigcup_{k=1}^{M} (h_{k,0,q}^{min}, h_{k,S_{max},q}^{max}]\\
              &=\bigcup_{k=1}^{M} (h_{k}^{min}, h_{k}^{max}]\\
              &=(h_{min}, h_{max}].
   \end{aligned}
   \end{equation}
 According to Eqs. (11) and (16), we can simply $\sum_{s=0}^{S_{max}}\sum_{q=0}^{Q}y_{M}(q, s, h)$ as follows,
      \begin{equation}
 \begin{aligned}
 \sum_{s=0}^{S_{max}}\sum_{q=0}^{Q}y_{M}(q, s, h)&=\sum_{q=0}^{Q}\sum_{x=0}^{S_{max}}\sum_{s=0}^{S_{max}}\sum_{k=1}^{M}\\ & ~g ^{*}(q, x, h)\mathbb{I}\{h\in(h_{k,s,q}^{min}, h_{k,s,q}^{max}]\}\\
              &=\!\!\sum_{q=0}^{Q}\!\!\sum_{x=0}^{S_{max}}\!\!g ^{*}(q, x, h)\mathbb{I}\{h\!\!\in\!\!(h_{min}, h_{max}]\}\\
              &=\sum_{q=0}^{Q}\!\!\sum_{x=0}^{S_{max}}\!\!g ^{*}(q, x, h)\\
              &=f_{H}(h).
   \end{aligned}
   \end{equation}

 \end{proof}

  \begin{lem}\label{lem2}
   The constructed solution $y_{M}(q, s, h)$ in Eq. (11) satisfies the constraint (9.d), i.e.,
  \begin{equation}
 \begin{aligned}
& ~\sum_{k=0}^{A}\sum_{s=0}^{S_{max}}\alpha_{k}\int_{h_{min}}^{h_{max}}y_{M}(q+s-k, s, h)dh=\\ & ~ \sum_{s=0}^{S_{max}}\int_{h_{min}}^{h_{max}}y_{M}(q, s, h)dh\ \ \ \forall q\in \mathcal{Q}.
   \end{aligned}
   \end{equation}
     \end{lem}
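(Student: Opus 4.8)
The plan is to show that the constructed solution $y_{M}$ inherits the queue-balance property (9.d) from $g^{*}$, exploiting the fact that $y_{M}$ is obtained from $g^{*}$ only by re-shuffling how the total mass $\sum_{x=0}^{S_{max}}g^{*}(q,x,h)$ is distributed among the transmission rates $s$, while leaving the $h$-integrated, $s$-summed mass unchanged. The crucial observation is that the intervals $(h_{k,s,q}^{min},h_{k,s,q}^{max}]$ for $s=0,\dots,S_{max}$ partition each subinterval $(h_{k}^{min},h_{k}^{max}]$, and by the definition (12)--(14) the total mass $y_{M}$ assigns to rate $s$ within block $k$ equals exactly $\int_{h_{k}^{min}}^{h_{k}^{max}}g^{*}(q,s,h)\,dh$ after an appropriate change of variables through $U_{q}^{-1}$. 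Thus, for each fixed $q$ and $s$, I would first establish the key identity
\begin{equation}
\int_{h_{min}}^{h_{max}}y_{M}(q,s,h)\,dh=\int_{h_{min}}^{h_{max}}\sum_{x=0}^{S_{max}}g^{*}(q,x,h)\,\mathbb{I}\{h\in(h_{k,s,q}^{min},h_{k,s,q}^{max}]\}\ \text{summed over }k=\int_{h_{min}}^{h_{max}}g^{*}(q,s,h)\,dh. \notag
\end{equation}

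To prove this identity I would work block by block: on $(h_{k}^{min},h_{k}^{max}]$, substitute $u=U_{q}(h)$ so that $dU_{q}=\sum_{x}g^{*}(q,x,h)\,dh$, turning $\int y_{M}(q,s,h)\,dh$ over block $k$ into the length $U_{q}(h_{k,s,q}^{max})-U_{q}(h_{k,s,q}^{min})$ of the image interval. By (12) and (14) this length telescopes to $\int_{h_{k}^{min}}^{h_{k}^{max}}\sum_{x=0}^{s}g^{*}(q,x,h)\,dh-\int_{h_{k}^{min}}^{h_{k}^{max}}\sum_{x=0}^{s-1}g^{*}(q,x,h)\,dh=\int_{h_{k}^{min}}^{h_{k}^{max}}g^{*}(q,s,h)\,dh$, using the convention $h_{k,-1,q}^{max}=h_{k}^{min}$ for the base case $s=0$. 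Summing over $k=1,\dots,M$ recovers $\int_{h_{min}}^{h_{max}}g^{*}(q,s,h)\,dh$, which is precisely the statement that $y_{M}$ and $g^{*}$ have the same $h$-integrated marginal in each $(q,s)$ coordinate.

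Once this identity is in hand, the lemma follows immediately: both sides of (19) are linear combinations (with coefficients $\alpha_{k}$ and $1$) of the integrals $\int y_{M}(q',s,h)\,dh$, and replacing every such integral by the equal quantity $\int g^{*}(q',s,h)\,dh$ reduces (19) to constraint (9.d) evaluated at $g^{*}$, which holds because $g^{*}$ is feasible for problem (9). The main obstacle I anticipate is the careful bookkeeping around the substitution $u=U_{q}(h)$: one must check that $U_{q}$ is continuous and strictly increasing on $\mathcal{H}$ (hence invertible) wherever the integrand is positive, that the images of the subintervals $(h_{k,s,q}^{min},h_{k,s,q}^{max}]$ under $U_{q}$ genuinely tile $(U_{q}(h_{k}^{min}),U_{q}(h_{k}^{max})]$ without overlap, and that the telescoping in the definition of $h_{k,s,q}^{max}$ is valid even on sets where $U_{q}$ is locally constant (so that $U_{q}^{-1}$ is chosen consistently). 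These are routine measure-theoretic checks but need to be stated cleanly; the algebraic heart of the argument is just the telescoping sum above.
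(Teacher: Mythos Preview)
Your proposal is correct and follows essentially the same route as the paper: establish the key identity $\int_{h_{min}}^{h_{max}} y_{M}(q,s,h)\,dh=\int_{h_{min}}^{h_{max}} g^{*}(q,s,h)\,dh$ block by block via $U_{q}(h_{k,s,q}^{max})-U_{q}(h_{k,s,q}^{min})$ and the telescoping built into (12)--(13), then plug this identity into both sides of (9.d) and invoke feasibility of $g^{*}$. The only cosmetic difference is that the paper writes the step from the block integral to $U_{q}(h_{k,s,q}^{max})-U_{q}(h_{k,s,q}^{min})$ directly from the definition of $U_{q}$ rather than framing it as a change of variables, and it does not spell out the invertibility and tiling caveats you flag.
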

 \begin{proof}
 We first show that for any $q\in \mathcal{Q}$, $s\in \mathcal{S}$, we have
 \begin{subequations} \label{10151408}
\renewcommand{\theequation}{\theparentequation.\alph{equation}}
 \begin{align}
& ~\int_{h_{min}}^{h_{max}}y_{M}(q, s, h)dh\\
 & ~ =\int_{h_{min}}^{h_{max}}\!\!\sum_{k=1}^{M}\sum_{x=0}^{S_{max}}\!\!g ^{*}(q, x, h)\mathbb{I}\{h\!\!\in\!\!(h_{k,s,q}^{min}, h_{k,s,q}^{max}]\}dh\\
  & ~ =\sum_{k=1}^{M}\int_{h_{k,s,q}^{min}}^{h_{k,s,q}^{max}}\sum_{x=0}^{S_{max}}g ^{*}(q, x, h)dh\\
    & ~ =\sum_{k=1}^{M}(U_{q}(h_{k,s,q}^{max})-U_{q}(h_{k,s,q}^{min}))\\
      & ~ =\sum_{k=1}^{M}(U_{q}(h_{k,s,q}^{max})-U_{q}(h_{k,s-1,q}^{max}))\\
      & ~ =\sum_{k=1}^{M}((\!\!\int_{h_{min}}^{h_{k}^{min}}\sum_{x=0}^{S_{max}}\!\!g ^{*}(q, x, h)dh \!\!+\!\!\!\int_{h_{k}^{min}}^{h_{k}^{max}}\!\!\sum_{x=0}^{s}\!\!g ^{*}(q, x, h)dh)\notag\\
      & ~-\!\!(\!\!\int_{h_{min}}^{h_{k}^{min}}\sum_{x=0}^{S_{max}}\!\!g ^{*}(q, x, h)dh \!\!+\!\!\!\int_{h_{k}^{min}}^{h_{k}^{max}}\sum_{x=0}^{s-1}\!\!g ^{*}(q, x, h)dh))\\
          & ~ =\sum_{k=1}^{M}\int_{h_{k}^{min}}^{h_{k}^{max}}g ^{*}(q, s, h)dh\\
          & ~ =\int_{h_{min}}^{h_{max}}g ^{*}(q, s, h)dh.
   \end{align}
 \end{subequations}

We obtain (19.b) through substituting $y_{M}(q, s, h)$ in (19.a) which is based on Eq. (11). By changing the integration interval, we can eliminate the indicative function in (19.b), as shown in (19.c). The definition of $U_{q}(x)$ in Eq. (10) helps us convert (19.d) to (19.c). We obtain (19.e) through substituting $h_{k,s,q}^{min}$ in (19.a) by $h_{k,s-1,q}^{max}$ which is based on Eq. (13). Based on the definition of $h_{k,s,q}^{max}$ in Eq. (12), we obtain (19.f) from (19.e). Finally, (19.h) is derived.

The optimal solution $g ^{*}(q, s, h)$ satisfies the constraint (9.d). Therefore, based on Eq. (19), we have
  \begin{equation}
 \begin{aligned}
& ~\sum_{k=0}^{A}\sum_{s=0}^{S_{max}}\alpha_{k}\int_{h_{min}}^{h_{max}}y_{M}(q+s-k, s, h)dh\\& ~ = \sum_{k=0}^{A}\sum_{s=0}^{S_{max}}\alpha_{k}\int_{h_{min}}^{h_{max}}g ^{*}(q+s-k, s, h)dh\\& ~
=\sum_{s=0}^{S_{max}}\int_{h_{min}}^{h_{max}}g ^{*}(q, s, h)dh\\& ~
=\sum_{s=0}^{S_{max}}\int_{h_{min}}^{h_{max}}y_{M}(q, s, h)dh,
   \end{aligned}
   \end{equation}
where $ q\in \mathcal{Q}$.

  \end{proof}

  \begin{lem}\label{lem2}
   The constructed solution $y_{M}(q, s, h)$ in Eq. (11) satisfies the constraint (9.b), i.e.,
  \begin{equation}
 \frac{1}{\overline{a}}\int_{h_{min}}^{h_{max}}\sum_{s=0}^{S_{max}}\sum_{q=0}^{Q}qy_{M}(q, s, h)dh\leq D_{th}.
   \end{equation}
     \end{lem}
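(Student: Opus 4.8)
The plan is to reduce this delay constraint for $y_M$ to the corresponding constraint for the optimal solution $g^{*}$, which holds because $g^{*}$ is feasible for problem~(9). The crucial ingredient is already available: in the proof of the preceding lemma it was shown, via the chain of equalities~(19), that for every $q\in\mathcal{Q}$ and $s\in\mathcal{S}$
$$\int_{h_{min}}^{h_{max}} y_M(q,s,h)\,dh \;=\; \int_{h_{min}}^{h_{max}} g^{*}(q,s,h)\,dh,$$
i.e. $y_M$ and $g^{*}$ have identical marginals in $h$ for each fixed transmission rate and queue length. Since the average queue length appearing in~(9.b) is a linear functional of $g$ whose weight $q$ does not depend on $h$, this per-$(q,s)$ identity is exactly what is needed.

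Concretely, I would first exchange the finite double sum over $q\in\{0,\dots,Q\}$ and $s\in\{0,\dots,S_{max}\}$ with the integral over $h$ — legitimate because the sums are finite and $y_M$ is nonnegative and bounded — to obtain $\int_{h_{min}}^{h_{max}}\sum_{s}\sum_{q} q\,y_M(q,s,h)\,dh=\sum_{s}\sum_{q} q\int_{h_{min}}^{h_{max}} y_M(q,s,h)\,dh$. Applying the identity above term by term replaces each $\int_{h_{min}}^{h_{max}} y_M(q,s,h)\,dh$ by $\int_{h_{min}}^{h_{max}} g^{*}(q,s,h)\,dh$, and exchanging the sums and the integral back yields $\int_{h_{min}}^{h_{max}}\sum_{s}\sum_{q} q\,g^{*}(q,s,h)\,dh$. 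Dividing by $\overline{a}$ and invoking the fact that $g^{*}$ satisfies constraint~(9.b) gives $\frac{1}{\overline{a}}\int_{h_{min}}^{h_{max}}\sum_{s}\sum_{q} q\,y_M(q,s,h)\,dh=\frac{1}{\overline{a}}\int_{h_{min}}^{h_{max}}\sum_{s}\sum_{q} q\,g^{*}(q,s,h)\,dh\le D_{th}$, which is precisely the assertion.

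The only delicate point — and the nearest thing to an obstacle — is matching the index ranges: the identity~(19) was established for $q\in\mathcal{Q}$ and $s\in\mathcal{S}$, whereas the sums in~(9.b) start from $q=0$ and $s=0$. This causes no gap. The $q=0$ terms are annihilated by the explicit factor $q$; for $s=0$ one checks from the definition~(11) that $h_{k,0,q}^{min}=h_{k,0,q}^{max}=h_{k}^{min}$, because $g^{*}(q,0,h)=0$ (as $0\notin\mathcal{S}$) makes the inner sum $\sum_{x=0}^{0}g^{*}(q,x,h)$ vanish, so the defining intervals are empty and $y_M(q,0,h)=0$, matching $g^{*}(q,0,h)=0$; and by constraint~(9.f) both $g^{*}$ and $y_M$ vanish whenever $q\notin\mathcal{Q}$. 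Hence both double sums effectively range only over $\mathcal{Q}\times\mathcal{S}$, and the term-by-term substitution of~(19) is valid. Beyond this bookkeeping the argument is entirely routine, so I do not anticipate any real difficulty.
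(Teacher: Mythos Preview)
Your proposal is correct and follows exactly the paper's approach: invoke the identity~(19) to replace $y_M$ by $g^{*}$ inside the average-queue-length expression, then use feasibility of $g^{*}$ for~(9.b). The extra bookkeeping you supply about the $q=0$ and $s=0$ boundary indices is more careful than the paper's own two-line argument, which simply applies~(19) without comment.
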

\begin{proof}
   Based on Eq. (19), we have
  \begin{equation}
 \begin{aligned}
& ~\frac{1}{\overline{a}}\int_{h_{min}}^{h_{max}}\sum_{s=0}^{S_{max}}\sum_{q=0}^{Q}qy_{M}(q, s, h)dh\\ & ~ =\frac{1}{\overline{a}}\int_{h_{min}}^{h_{max}}\sum_{s=0}^{S_{max}}\sum_{q=0}^{Q}qg^{*}(q, s, h)dh\\ & ~
\leq D_{th}.
   \end{aligned}
   \end{equation}
\end{proof}
  \begin{lem}\label{lem2}
   The constructed solution $y_{M}(q, s, h)$ in Eq. (11) satisfies the constraint (9.e), i.e.,
  \begin{equation}
y_{M}(q,s,h)\geq 0 \ \ \forall q\in \mathcal{Q}, s\in \mathcal{S}, h\in \mathcal{H}.
   \end{equation}
     \end{lem}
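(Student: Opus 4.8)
The plan is to obtain the nonnegativity directly from the algebraic form of the construction in Eq.~(11), without invoking any of the machinery used in the previous lemmas. First I would recall that $g^{*}(q,s,h)$ is the optimal solution of problem (9), hence in particular a feasible point; therefore it satisfies constraint (9.e), namely $g^{*}(q,x,h)\geq 0$ for every $q\in\mathcal{Q}$, $x\in\mathcal{S}$, $h\in\mathcal{H}$ (and it vanishes outside this range by (9.f)). Second, for any choice of the endpoints, the indicator $\mathbb{I}\{h\in(h_{k,s,q}^{min},h_{k,s,q}^{max}]\}$ takes only the values $0$ and $1$, so it is nonnegative.

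The argument is then essentially one line: fix $q\in\mathcal{Q}$, $s\in\mathcal{S}$, $h\in\mathcal{H}$; each term $g^{*}(q,x,h)\,\mathbb{I}\{h\in(h_{k,s,q}^{min},h_{k,s,q}^{max}]\}$ in the double sum over $1\leq k\leq M$ and $0\leq x\leq S_{max}$ in Eq.~(11) is a product of two nonnegative reals, hence nonnegative, and a finite sum of nonnegative reals is nonnegative; therefore $y_{M}(q,s,h)\geq 0$.

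I do not expect a real obstacle here; the only point worth a sentence is whether the half-open intervals $(h_{k,s,q}^{min},h_{k,s,q}^{max}]$ are meaningful when $h_{k,s,q}^{min}\geq h_{k,s,q}^{max}$. In that degenerate case the interval is empty and its indicator is identically zero, contributing nothing to the sum, so no monotonicity or invertibility property of $U_{q}$ need be used for this lemma. This establishes $y_{M}(q,s,h)\geq 0$ for all $q\in\mathcal{Q}$, $s\in\mathcal{S}$, $h\in\mathcal{H}$, which, together with Lemmas~1--3, places $y_{M}$ in the feasible region of problem (9).
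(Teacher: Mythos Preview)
Your proof is correct and follows essentially the same approach as the paper's one-line argument: since $g^{*}$ satisfies (9.e), the construction in Eq.~(11) is a finite sum of products of nonnegative quantities (the $g^{*}$-values times indicators), hence nonnegative. Your closing remark slightly overreaches---feasibility in problem~(9) also requires constraint (9.f), which is treated in a separate lemma---but this does not affect the proof of the present statement.
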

\begin{proof}
Since $g^{*}(q, s, h)$ satisfies the constraint (9.e) in problem (9), based on Eq. (11) $y_{M}(q, s, h)$ also satisfies the constraint (9.e).
\end{proof}

  \begin{lem}\label{lem2}
   The constructed solution $y_{M}(q, s, h)$ satisfies the constraint (9.f) in problem (9), i.e.,
  \begin{equation}
  y_{M}(q,s,h)=0 \ \ \ \forall q\!-\!s<0 \ \text{or}\ q\!-\!s>Q\!-\!A \ \text{or}\ q \notin \mathcal{Q}\ \text{or}\ s \notin \mathcal{S}.
   \end{equation}
     \end{lem}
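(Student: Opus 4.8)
\noindent\textit{Proof strategy.} The plan is to reduce the claim to a single structural observation about the half-open intervals $(h_{k,s,q}^{min},h_{k,s,q}^{max}]$ appearing in Eq.~(11), and then to verify the cases of constraint~(9.f) one at a time. First note that the construction in Eq.~(11) only assigns a value when $q\in\mathcal{Q}$ and $0\le s\le S_{max}$, since the endpoints $h_{k,s,q}^{max}$ in Eq.~(13) are defined only for $0\le s\le S_{max}$; for $q\notin\mathcal{Q}$ or $s>S_{max}$ we set $y_{M}(q,s,h)=0$ by convention, so these sub-cases are immediate. (Equivalently, when $q\notin\mathcal{Q}$ every term $g^{*}(q,x,h)$ on the right-hand side of~(11) already vanishes by~(9.f).) It therefore remains to treat $q\in\mathcal{Q}$ together with $s\in\{0,1,\dots,S_{max}\}$.

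The key step I would isolate is: \emph{if $g^{*}(q,s,h)=0$ for every $h\in\mathcal{H}$, then $y_{M}(q,s,h)=0$ for every $h\in\mathcal{H}$}. To prove it, fix $k\in\{1,\dots,M\}$ and compare the arguments of $U_{q}^{-1}(\cdot)$ that define $h_{k,s,q}^{max}$ and $h_{k,s,q}^{min}=h_{k,s-1,q}^{max}$. By Eq.~(12), the argument for $h_{k,s,q}^{max}$ is $\int_{h_{min}}^{h_{k}^{min}}\sum_{x=0}^{S_{max}}g^{*}(q,x,h)\,dh+\int_{h_{k}^{min}}^{h_{k}^{max}}\sum_{x=0}^{s}g^{*}(q,x,h)\,dh$, while (using Eq.~(13), and for $s=0$ the convention $h_{k,-1,q}^{max}=h_{k}^{min}$ together with $U_{q}(h_{k}^{min})=\int_{h_{min}}^{h_{k}^{min}}\sum_{x=0}^{S_{max}}g^{*}(q,x,h)\,dh$) the argument for $h_{k,s,q}^{min}$ is the same expression with $\sum_{x=0}^{s}$ replaced by $\sum_{x=0}^{s-1}$. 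Their difference is precisely $\int_{h_{k}^{min}}^{h_{k}^{max}}g^{*}(q,s,h)\,dh$, which equals $0$ under the hypothesis. Hence $h_{k,s,q}^{max}=h_{k,s,q}^{min}$, the interval $(h_{k,s,q}^{min},h_{k,s,q}^{max}]$ is empty, its indicator vanishes identically in $h$, and summing over $k$ in Eq.~(11) gives $y_{M}(q,s,h)=0$.

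With this step in hand, the lemma follows by invoking~(9.f) for $g^{*}$ in the remaining cases: if $s\notin\mathcal{S}$ with $s=0$ (the case $s>S_{max}$ having been handled above), or if $q-s<0$, or if $q-s>Q-A$, then in each case $g^{*}(q,s,h)=0$ for all $h\in\mathcal{H}$ by~(9.f); since the key step also covers the boundary index $s=0$, it yields $y_{M}(q,s,h)=0$ in all of these cases. Together with the trivial sub-cases $q\notin\mathcal{Q}$ and $s>S_{max}$, this exhausts constraint~(9.f).

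The only genuine---though mild---obstacle I anticipate is that $y_{M}(q,s,h)$ is assembled from the partial sums $\sum_{x=0}^{s}g^{*}(q,x,h)$ rather than from $g^{*}(q,s,h)$ alone, so one cannot simply transport the zero of $g^{*}$ pointwise; the resolution is exactly the observation above that the vanishing of $g^{*}(q,s,\cdot)$ collapses the defining interval to a point. The other small point to watch is the boundary index $s=0$ with the convention $h_{k,-1,q}^{max}=h_{k}^{min}$, and the bookkeeping that the conditions $q\notin\mathcal{Q}$ and $s>S_{max}$ fall outside the range on which Eq.~(11) is even defined.
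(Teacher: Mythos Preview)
Your proposal is correct and follows essentially the same route as the paper: both arguments show that whenever $g^{*}(q,s,\cdot)\equiv 0$ (which holds by~(9.f) in the relevant cases), the defining endpoints satisfy $h_{k,s,q}^{max}=h_{k,s,q}^{min}$ so the indicator in Eq.~(11) vanishes and hence $y_{M}(q,s,h)=0$. Your version is in fact slightly more careful about the edge cases $q\notin\mathcal{Q}$, $s>S_{max}$, and $s=0$, and about the fact that it is the \emph{arguments} of $U_{q}^{-1}$ that coincide, but the underlying idea is identical.
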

\begin{proof}
Since $g^{*}(q, s, h)$ satisfies the constraint (9.f), if $q-s<0\ \text{or}\ q-s>Q-A\ \text{or}\ q \notin \mathcal{Q}\ \text{or}\ s \notin \mathcal{S}$, then we have
  \begin{equation}
  g^{*}(q, s, h)=0,
   \end{equation}
 When $g^{*}(q, s, h)=0$, we have
   \begin{equation}
 \begin{aligned}
h_{k,s,q}^{max}-h_{k,s-1,q}^{max}& ~ =\int_{h_{k}^{min}}^{h_{k}^{max}}g ^{*}(q, s, h)dh\\ & ~
=0.
   \end{aligned}
   \end{equation}
 Based on Eqs. (13) and (26), we have
 \begin{equation}
h_{k,s,q}^{max}=h_{k,s,q}^{min}.
 \end{equation}
 Finally, based on Eq. (11), using the relationship $h_{k,s,q}^{max}=h_{k,s,q}^{min}$ gives $y_{M}(q, s, h)=0$.
\end{proof}

   \newtheorem{thm}{\bf Theorem}
   \begin{thm}\label{thm1}
   The constructed solution $y_{M}(q, s, h)$ is in the feasible region of problem (9).
\end{thm}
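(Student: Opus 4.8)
The plan is to assemble the theorem directly from Lemmas 1 through 5, since each of those lemmas has already verified that the constructed solution $y_M(q,s,h)$ meets one of the constraints that jointly define the feasible region of problem (9). Concretely, I would first recall that this feasible region is cut out by the five conditions (9.b)--(9.f): the average-delay bound (9.b), the channel-marginal equality requiring $\sum_{s}\sum_{q}g(q,s,h)=f_H(h)$ (9.c), the queue balance equation (9.d), the non-negativity condition (9.e), and the support condition (9.f) that forbids overflow and underflow. The objective (9.a) does not enter the definition of feasibility, so these five conditions are exactly what must be checked.

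Next I would invoke the lemmas one at a time and match them to the constraints: Lemma~\ref{lem1} establishes (9.c); the lemma proving the queue balance relation establishes (9.d); the lemma bounding the average delay establishes (9.b); the lemma on non-negativity establishes (9.e); and the lemma on the zero-support set establishes (9.f). Since $M\in\mathbb{N}$, $q\in\mathcal{Q}$ and $h\in\mathcal{H}$ were left arbitrary throughout the construction in Eq.~(11), each of these conclusions holds for every admissible index with no residual cases. Chaining the five statements together, $y_M(q,s,h)$ lies in the feasible set of problem (9), which is precisely the assertion of the theorem.

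The only thing that needs care here is bookkeeping rather than mathematics: I must make sure the conditions established by the five lemmas really do exhaust the constraint list of (9). In particular I would note that the two displayed equations both tagged (9.c) in the problem statement (the channel marginal and the queue balance) are treated separately, by the first and second lemmas respectively, and that (9.e) and (9.f) are kept distinct. There is no genuine obstacle; the substantive analysis was already carried out in Lemmas 1--5, and this theorem is simply the clean corollary that packages them. I would therefore keep the proof to two or three lines, stating that (9.b)--(9.f) are verified by the preceding lemmas and concluding that $y_M(q,s,h)$ is feasible for problem (9).
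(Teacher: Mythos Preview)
Your proposal is correct and matches the paper's own proof essentially verbatim: the paper simply writes ``Based on Lemma 1--5, Theorem 1 is proved,'' and your plan is exactly this assembly of the five lemmas against the constraints (9.b)--(9.f). Your additional bookkeeping remark about the duplicated label is a nice sanity check but not needed for the argument.
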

\begin{proof}
Based on Lemma 1-5, Theorem 1 is proved.
\end{proof}
 \begin{lem}\label{lem2}
   For any $q\in\mathcal{Q}$, $h\in\mathcal{H}$, there is at most one $s\in\mathcal{S}$ satisfies that

    \begin{equation}
    y_{M}(q, s, h)>0.
    \end{equation}

  \end{lem}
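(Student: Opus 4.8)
The plan is to exploit the nested, consecutive structure of the intervals $(h_{k,s,q}^{min},h_{k,s,q}^{max}]$ that was already used in the proofs of Lemmas~1 and 2. Fix $q\in\mathcal{Q}$ and $h\in\mathcal{H}$. First I would observe from Eq.~(11) that, for this fixed pair, the quantity $\sum_{x=0}^{S_{max}}g^{*}(q,x,h)$ does not depend on $k$ or $s$, so it factors out and
$y_{M}(q,s,h)=\bigl(\sum_{x=0}^{S_{max}}g^{*}(q,x,h)\bigr)\sum_{k=1}^{M}\mathbb{I}\{h\in(h_{k,s,q}^{min},h_{k,s,q}^{max}]\}$.
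Hence $y_{M}(q,s,h)>0$ can hold only if this common factor is strictly positive and, in addition, $h\in(h_{k,s,q}^{min},h_{k,s,q}^{max}]$ for some $k\in\{1,\dots,M\}$. It therefore suffices to show that $h$ belongs to such an interval for at most one value of $s$.

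Next I would pin down the geometry of these intervals. For a fixed $k$, the sequence $h_{k,-1,q}^{max}=h_{k}^{min}\le h_{k,0,q}^{max}\le h_{k,1,q}^{max}\le\cdots\le h_{k,S_{max},q}^{max}$ is nondecreasing, since Eq.~(12) shows that passing from $s-1$ to $s$ adds the nonnegative quantity $\int_{h_{k}^{min}}^{h_{k}^{max}}g^{*}(q,s,h)\,dh$ inside $U_{q}^{-1}(\cdot)$ and $U_{q}$ is nondecreasing; moreover $h_{k,S_{max},q}^{max}=h_{k}^{max}$ by Eq.~(15). Combined with $h_{k,s,q}^{min}=h_{k,s-1,q}^{max}$ from Eq.~(13), this means that for each fixed $k$ the half-open intervals $(h_{k,s,q}^{min},h_{k,s,q}^{max}]$, $s=0,\dots,S_{max}$, are pairwise disjoint and their union is exactly $(h_{k}^{min},h_{k}^{max}]$ (this is precisely Eq.~(16) restricted to a single $k$); in particular each of them is contained in $(h_{k}^{min},h_{k}^{max}]$.

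Then I would use the fact that the coarse intervals $(h_{k}^{min},h_{k}^{max}]$, $k=1,\dots,M$, are consecutive, disjoint, and partition $(h_{min},h_{max}]$, so the fixed point $h$ lies in exactly one of them, say $(h_{k^{*}}^{min},h_{k^{*}}^{max}]$. For every $k\neq k^{*}$ and every $s$ we then have $h\notin(h_{k}^{min},h_{k}^{max}]\supseteq(h_{k,s,q}^{min},h_{k,s,q}^{max}]$, so all those indicators vanish. For $k=k^{*}$, the disjointness established above shows that $h\in(h_{k^{*},s,q}^{min},h_{k^{*},s,q}^{max}]$ for at most one $s$. Consequently the only $s$ for which $y_{M}(q,s,h)$ can be strictly positive is that unique $s$, if it exists, which proves the claim; and if the common factor $\sum_{x=0}^{S_{max}}g^{*}(q,x,h)$ is zero, then $y_{M}(q,s,h)=0$ for all $s$ and the statement holds trivially.

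The only mildly delicate point — the main ``obstacle'' — is that some of the fine intervals $(h_{k,s,q}^{min},h_{k,s,q}^{max}]$ may be degenerate (empty) when $g^{*}(q,s,\cdot)$ integrates to zero over $(h_{k}^{min},h_{k}^{max}]$; I would therefore be careful to state everything in terms of ``at most one'' rather than ``exactly one'', and to note that $U_{q}^{-1}(\cdot)$ is applied only to values lying in the range of $U_{q}$, so the expressions in Eq.~(12) are well defined. Beyond this, the argument is just bookkeeping with the monotonicity of $U_{q}$.
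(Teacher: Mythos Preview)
Your proposal is correct and follows essentially the same route as the paper: both arguments show that for fixed $q$ the half-open intervals $(h_{k,s,q}^{min},h_{k,s,q}^{max}]$ are pairwise disjoint (across different $k$ because they sit in disjoint coarse blocks $(h_k^{min},h_k^{max}]$, and across different $s$ within a block by the monotonicity encoded in Eqs.~(12)--(13)), and then combine this with Eq.~(16) to conclude that at most one indicator in Eq.~(11) can be nonzero. Your explicit factoring of $\sum_{x}g^{*}(q,x,h)$ and your care with degenerate intervals and the ``at most one'' phrasing are tidier than the paper's version, but the underlying idea is the same.
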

 \begin{proof}
 Based on Eqs. (13) (15), for any $k_{1}, k_{2}\in [1:M]$, $s_{1}, s_{2}\in \mathcal{S}$ and $q\in\mathcal{Q}$, we have $(h_{k_{1},s_{1},q}^{min}, h_{k_{1},s_{1},q}^{max}]\in (h_{k_{1}}^{min}, h_{k_{1}}^{max}]$ and $(h_{k_{2},s_{2},q}^{min}, h_{k_{2},s_{2},q}^{max}]\in (h_{k_{2}}^{min}, h_{k_{2}}^{max}]$.

 If $k_{1}\neq k_{2}$, then based on the definition of $h_{k}^{min}$ and $h_{k}^{max}$ we have
     \begin{equation}
  (h_{k_{1},s_{1},q}^{min}, h_{k_{1},s_{1},q}^{max}]\bigcap (h_{k_{2},s_{2},q}^{min}, h_{k_{2},s_{2},q}^{max}]=\emptyset.
   \end{equation}
  If $k_{1}= k_{2}$ and $s_{1}\neq s_{2}$, then based on Eqs. (12) (13), we still  have
     \begin{equation}
  (h_{k_{1},s_{1},q}^{min}, h_{k_{1},s_{1},q}^{max}]\bigcap (h_{k_{2},s_{2},q}^{min}, h_{k_{2},s_{2},q}^{max}]=\emptyset.
   \end{equation}
   Furthermore, in Eq. (16), we have obtained that
    \begin{equation}
  \bigcup_{k=1}^{M}\bigcup_{s=0}^{S_{max}} (h_{k,s,q}^{min}, h_{k,s,q}^{max}]=(h_{min}, h_{max}].
   \end{equation}
  Therefore, based on Eqs. (29) (30) and (31), for any $h\in(h_{min}, h_{max}]$ and $q\in\mathcal{Q}$, there is only one $k^{*}$ and $s^{*}$ satisfies that
   \begin{equation}
  \mathbb{I}\{h\in(h_{k^{*},s^{*},q}^{min}, h_{k^{*},s^{*},q}^{max}]\}=1.
   \end{equation}
   Finally, based on Eqs. (11) and (32), Lemma 6 is proved.
\end{proof}
Let $\{f_{M}(s|q,h)\}$ denote the scheduling policy corresponding to constructed $y_{M}(q, s, h)$.
\begin{thm}\label{thm1}
    The scheduling policy $\{f_{M}(s|q,h)\}$ is the deterministic policy.
\end{thm}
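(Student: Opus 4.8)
The plan is to read off the scheduling policy from the constructed density via Eq.~(5) and then invoke Lemma~6. By definition, the policy attached to $y_{M}(q,s,h)$ is $f_{M}(s|q,h)=y_{M}(q,s,h)\big/\sum_{x=0}^{S_{max}}y_{M}(q,x,h)$ on every state $(q,h)$ for which the denominator is positive. A policy is \emph{deterministic} precisely when, for each state $(q,h)$, the conditional law $f_{M}(\cdot\,|\,q,h)$ is a point mass, i.e.\ equals $1$ for a single rate and $0$ for all others; so it suffices to establish this point-mass structure.

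Fix $q\in\mathcal{Q}$ and $h\in\mathcal{H}$ with $\sum_{x=0}^{S_{max}}y_{M}(q,x,h)>0$. Lemma~6 guarantees that \emph{at most} one $s\in\mathcal{S}$ has $y_{M}(q,s,h)>0$; since the denominator is a finite sum of the nonnegative quantities $y_{M}(q,x,h)$ and is strictly positive, there is in fact \emph{exactly} one such index, say $s^{*}=s^{*}(q,h)$. For that index the numerator equals the whole sum, whence $f_{M}(s^{*}\,|\,q,h)=1$, while $f_{M}(s\,|\,q,h)=0$ for every $s\neq s^{*}$. Hence on every visited state the transmission rate is a deterministic function $s^{*}(q,h)$ of the queue length and the channel gain, which is exactly the claimed structure. (If a closed form for the denominator is desired, summing Eq.~(11) over $s$ and using the partition identity~(16) together with the pairwise disjointness of the intervals $(h_{k,s,q}^{min},h_{k,s,q}^{max}]$ shown in the proof of Lemma~6 yields $\sum_{s=0}^{S_{max}}y_{M}(q,s,h)=\sum_{x=0}^{S_{max}}g^{*}(q,x,h)$, but this is not needed for the argument.)

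It remains only to handle the states $(q,h)$ at which $\sum_{x=0}^{S_{max}}y_{M}(q,x,h)=0$: such states carry no steady-state probability and therefore contribute nothing to either $D_{ave}$ or $P_{ave}$, so $f_{M}(\cdot\,|\,q,h)$ may be fixed to any deterministic rate there without affecting the policy's performance. Combining the two cases shows that $\{f_{M}(s|q,h)\}$ is deterministic. I expect essentially no obstacle here: the substantive content — the ``at most one nonzero $s$'' property — is already delivered by Lemma~6, and the only mild care required is the bookkeeping that upgrades ``at most one'' to ``exactly one'' on the support and the (harmless) treatment of the null states.
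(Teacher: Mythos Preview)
Your proof is correct and follows essentially the same approach as the paper: write $f_{M}(s|q,h)$ as the ratio from Eq.~(5), invoke Lemma~6 to conclude that on states with positive denominator the conditional law is a point mass, and dismiss the zero-denominator states as transient. Your argument is in fact slightly more careful than the paper's, since you make explicit the step upgrading ``at most one'' to ``exactly one'' on the support.
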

\begin{proof}
Based on Eq. (5), the scheduling parameters $\{f_{M}(s|q,h)\}$ can be obtained as
 \begin{equation}
 f_{M}(s|q,h)=\frac{y_{M}(q, s, h)}{\sum_{x=0}^{S_{max}}y_{M}(q, x, h)}.
 \end{equation}
 If the denominator in Eq. (33) is $0$, the queue state $q$ is the transient state. If the denominator in Eq. (33) is not $0$. Then, based on Lemma 6, the $f_{M}(s|q,h)$ in Eq. (33) is $0$ or $1$, which means that the scheduling policy corresponding to $y_{M}(q, s, h)$ is the deterministic policy.
\end{proof}
Let $P_{M}$ and $P^{*}$ denote the average power corresponding to the constructed solution $y_{M}(q, s, h)$ and the optimal solution $g^{*}(q, s, h)$, respectively. Next, we explore the relationship between the $P_{M}$ and $P^{*}$.

\begin{lem}\label{lem2}
   The relationship between $P_{M}$ and $P^{*}$ satisfies that
    \begin{equation}
    P^{*}\leq P_{M}\leq (1+\frac{h_{max}-h_{min}}{M h_{min}})P^{*}.
    \end{equation}
  \end{lem}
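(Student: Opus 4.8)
The plan is to bound $P_M$ both from below and from above by comparing the objective value of $y_M$ against that of the optimal solution $g^*$, exploiting that $y_M$ merely redistributes the mass $\sum_x g^*(q,x,h)$ among the transmission rates $s$ within each small channel sub-interval $(h_k^{\min}, h_k^{\max}]$. The lower bound $P^* \le P_M$ is immediate: by Theorem~1, $y_M$ is feasible for problem~(9), and $P^*$ is the optimal (minimum) value, so any feasible point has objective at least $P^*$.

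For the upper bound I would work interval by interval. Fix $q$ and $k$, and consider the slab $h\in(h_k^{\min}, h_k^{\max}]$. On the portion of this slab where $y_M(q,s,h)>0$, equation~(11) shows that $y_M(q,s,h) = \sum_{x=0}^{S_{\max}} g^*(q,x,h)$; that is, all the mass that $g^*$ spread across rates $x=0,\dots,S_{\max}$ has been concentrated onto a single rate $s$ (which one depends on the sub-sub-interval via $h_{k,s,q}^{\min}, h_{k,s,q}^{\max}$). Crucially, by the construction in Eqs.~(12)--(13), the total $g^*$-mass assigned to rate $s$ inside slab $k$ equals $\int_{h_k^{\min}}^{h_k^{\max}} g^*(q,s,h)\,dh$ (this is exactly line (19.g)), so $y_M$ does not change, for each rate $s$, the amount of probability carried at that rate within the slab — it only moves that probability to different channel values within the same slab. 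Therefore the contribution of slab $k$ to $P_M$ is $\sum_{s}\xi(s)\int_{h_k^{\min}}^{h_k^{\max}} h^{-1} y_M(q,s,h)\,dh$, and since $y_M$ at rate $s$ lives somewhere in $(h_k^{\min},h_k^{\max}]$, I can bound $h^{-1}\le (h_k^{\min})^{-1}$ there, obtaining an upper bound $\sum_s \xi(s)(h_k^{\min})^{-1}\int_{h_k^{\min}}^{h_k^{\max}} g^*(q,s,h)\,dh$. Similarly the contribution of slab $k$ to $P^*$ is at least $\sum_s\xi(s)(h_k^{\max})^{-1}\int_{h_k^{\min}}^{h_k^{\max}} g^*(q,s,h)\,dh$, using $h^{-1}\ge (h_k^{\max})^{-1}$.

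The remaining step is to compare $(h_k^{\min})^{-1}$ with $(h_k^{\max})^{-1}$. Since $h_k^{\max}-h_k^{\min} = (h_{\max}-h_{\min})/M$ and $h_k^{\min}\ge h_{\min}$, we get
\begin{equation}
\frac{(h_k^{\min})^{-1}}{(h_k^{\max})^{-1}} = \frac{h_k^{\max}}{h_k^{\min}} = 1 + \frac{h_k^{\max}-h_k^{\min}}{h_k^{\min}} \le 1 + \frac{h_{\max}-h_{\min}}{M h_{\min}}.
\end{equation}
Hence the slab-$k$ contribution to $P_M$ is at most $\bigl(1+\tfrac{h_{\max}-h_{\min}}{M h_{\min}}\bigr)$ times the slab-$k$ contribution to $P^*$; summing over $k$, $q$ (and noting that on slabs or rate-intervals where $g^*$ vanishes the contributions are zero by Lemma~5) yields $P_M \le \bigl(1+\tfrac{h_{\max}-h_{\min}}{M h_{\min}}\bigr) P^*$, which together with $P^*\le P_M$ is exactly the claim.

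The main obstacle is the bookkeeping in the upper-bound step: one must argue carefully that, within a fixed slab $k$ and fixed queue state $q$, the mass $y_M$ places at rate $s$ (a) is supported in $(h_k^{\min}, h_k^{\max}]$ and (b) integrates to the same value $\int_{h_k^{\min}}^{h_k^{\max}} g^*(q,s,h)\,dh$ as the original. Both facts follow from the definitions of $h_{k,s,q}^{\min}, h_{k,s,q}^{\max}$ and from the telescoping computation already carried out in (19.a)--(19.h), so I would simply invoke that chain of equalities rather than redo it. Once the per-slab mass-preservation is in hand, the power comparison reduces to the one-line ratio bound above.
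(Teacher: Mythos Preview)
Your proposal is correct and follows essentially the same route as the paper: feasibility of $y_M$ gives the lower bound, and for the upper bound you bound $h^{-1}\le (h_k^{\min})^{-1}$ on the $y_M$ side, invoke the mass-preservation identity (19) to replace the slab mass by $\int_{h_k^{\min}}^{h_k^{\max}} g^*(q,s,h)\,dh$, and then use $h_k^{\max}/h_k^{\min}\le 1+(h_{\max}-h_{\min})/(M h_{\min})$. The only cosmetic difference is that the paper inserts the factor $h_k^{\max}/h\ge 1$ directly into the integrand to recover $\xi(s)h^{-1}g^*$ (so the whole chain is a single string of upper bounds on $P_M$), whereas you bound the slab contribution to $P^*$ from below separately and then take a ratio; the two are equivalent.
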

 \begin{proof}
 Since $g^{*}(q, s, h)$ is the optimal solution of problem (9), then $P^{*}\leq P_{M}$ is established.
  \begin{subequations} \label{10151408}
\renewcommand{\theequation}{\theparentequation.\alph{equation}}
 \begin{align}
P_{M}& ~= \int_{h_{min}}^{h_{max}}\sum_{s=0}^{S_{max}}\sum_{q=0}^{Q}\xi(s)h^{-1}y_{M}(q, s, h)dh\\
 & ~ =\int_{h_{min}}^{h_{max}}\sum_{s=0}^{S_{max}}\sum_{q=0}^{Q}\xi(s)h^{-1}\sum_{k=1}^{M}\sum_{x=0}^{S_{max}}\notag\\  & ~ g ^{*}(q, x, h)\mathbb{I}\{h\in(h_{k,s,q}^{min}, h_{k,s,q}^{max}]\}dh\\
 & ~ =\!\!\sum_{s=0}^{S_{max}}\sum_{k=1}^{M}\sum_{q=0}^{Q}\int_{h_{k,s,q}^{min}}^{h_{k,s,q}^{max}}\xi(s)h^{-1}  \!\!\sum_{x=0}^{S_{max}}\!\!g ^{*}(q, x, h)dh\\
& ~ \leq\!\!\sum_{s=0}^{S_{max}}\sum_{k=1}^{M}\sum_{q=0}^{Q}\frac{\xi(s)}{h_{k}^{min}}\int_{h_{k,s,q}^{min}}^{h_{k,s,q}^{max}}  \!\!\sum_{x=0}^{S_{max}}\!\!g ^{*}(q, x, h)dh\\
 & ~  =\!\!\sum_{s=0}^{S_{max}}\sum_{k=1}^{M}\sum_{q=0}^{Q}\frac{\xi(s)}{h_{k}^{min}}(U_{q}(h_{k,s,q}^{max})-U_{q}(h_{k,s,q}^{min}))\\
& ~  =\!\!\sum_{s=0}^{S_{max}}\sum_{k=1}^{M}\sum_{q=0}^{Q}\frac{\xi(s)}{h_{k}^{min}}(U_{q}(h_{k,s,q}^{max})-U_{q}(h_{k,s-1,q}^{max}))\\
      & ~ =\!\!\sum_{s=0}^{S_{max}}\sum_{k=1}^{M}\sum_{q=0}^{Q}\frac{\xi(s)}{h_{k}^{min}}\int_{h_{k}^{min}}^{h_{k}^{max}}g ^{*}(q, s, h)dh\\
       & ~ \leq\!\!\sum_{s=0}^{S_{max}}\sum_{k=1}^{M}\sum_{q=0}^{Q}\frac{\xi(s)}{h_{k}^{min}}\int_{h_{k}^{min}}^{h_{k}^{max}}\frac{h_{k}^{max}}{h}g ^{*}(q, s, h)dh\\
          & ~ =\!\!\sum_{s=0}^{S_{max}}\sum_{q=0}^{Q}\sum_{k=1}^{M}\frac{h_{k}^{max}}{h_{k}^{min}}\int_{h_{k}^{min}}^{h_{k}^{max}}\!\!\!\xi(s)h^{-1}g ^{*}(q, s, h)dh\\
   & ~ \leq\!\!\sum_{s=0}^{S_{max}}\sum_{q=0}^{Q}\sum_{k=1}^{M}\frac{h_{1}^{max}}{h_{1}^{min}}\int_{h_{k}^{min}}^{h_{k}^{max}}\!\!\!\xi(s)h^{-1}g ^{*}(q, s, h)dh\\
      &~= (1+\frac{h_{max}-h_{min}}{M h_{min}})P^{*}.
   \end{align}
 \end{subequations}

The indicative function in (35.b) is eliminated by changing the integration interval of $h$, which is shown in (35.c). Since $h_{k}^{min}\leq h_{k,s,q}^{min}$, formula (35.c) can be enlarged to (35.d) by replacing $h$ with $h_{k}^{min}$. The definition of $U_{q}(x)$ in Eq. (10) helps us convert formula (35.d) to (35.e). Based on Eq. (13), the $h_{k,s,q}^{min}$ in (35.e) can be replaced by  $h_{k,s-1,q}^{max}$ as shown in (35.f). Fortunately, based on the definition of $h_{k,s,q}^{max}$ in Eq. (12), the $U_{q}(h_{k,s,q}^{max})-U_{q}(h_{k,s-1,q}^{max})$ in (35.f) can be simplified, which is shown in (35.g). Since $\frac{h_{k}^{max}}{h_{k}^{min}}\leq \frac{h_{1}^{max}}{h_{1}^{min}}$ for any $k\in[1:M]$, formula (35.i) can be enlarged to (35.j). Finally, based on Eq. (8), we can obtain (35.k).
  \end{proof}
 \begin{thm}\label{thm1}
    For any $\epsilon>0$, there exists an $m=\frac{h_{max}-h_{min}}{\epsilon h_{min}}$. For all $ M\geq m$, we have
        \begin{equation}
    1\leq \frac{P_{M}}{P^{*}}\leq 1+\epsilon.
    \end{equation}
\end{thm}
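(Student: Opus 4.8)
The plan is to treat Theorem~3 as an immediate quantitative corollary of Lemma~7, which already supplies the two-sided bound
\[
P^{*}\leq P_{M}\leq\Bigl(1+\tfrac{h_{max}-h_{min}}{M h_{min}}\Bigr)P^{*}.
\]
First I would note that $P^{*}>0$: since the transmission rate $s\geq 1$ whenever the queue is served (indeed $\mathcal{S}=\{1,\dots,S_{max}\}$ with $S_{max}\geq A\geq 1$), and $\xi$ is strictly increasing with $\xi(s)>0$ for $s\geq 1$, while the queue must be served at the long-run rate $\overline{a}>0$ to keep $D_{ave}$ finite, the objective in (9.a) is strictly positive at any feasible point, hence at $g^{*}$. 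This lets me divide the chain of inequalities in Lemma~7 through by $P^{*}$ without reversing any inequality, yielding
\[
1\leq \frac{P_{M}}{P^{*}}\leq 1+\frac{h_{max}-h_{min}}{M h_{min}}.
\]

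Next I would observe that the right-hand bound is monotonically decreasing in $M$, because $h_{max}>h_{min}>0$ makes $\frac{h_{max}-h_{min}}{M h_{min}}$ a positive, strictly decreasing function of $M$. Setting $m=\frac{h_{max}-h_{min}}{\epsilon h_{min}}$, for any integer $M\geq m$ one has $\frac{1}{M}\leq\frac{1}{m}=\frac{\epsilon h_{min}}{h_{max}-h_{min}}$, so $\frac{h_{max}-h_{min}}{M h_{min}}\leq\epsilon$. Combining this with the displayed two-sided bound gives exactly $1\leq P_{M}/P^{*}\leq 1+\epsilon$, which is the claim.

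The argument is essentially bookkeeping once Lemma~7 is in hand, so there is no real obstacle; the only subtlety worth stating explicitly is the positivity of $P^{*}$ (without it, dividing by $P^{*}$ is illegitimate and the ratio in the statement is undefined), together with the implicit assumption $h_{min}>0$ — physically the channel energy gain is positive — which is what makes $\xi(s)h^{-1}$ integrable and the bound $h_{k}^{max}/h_{k}^{min}$ finite. I would also remark that $m$ need not be an integer; the statement should be read as ``for all integers $M\geq m$,'' and taking $M=\lceil m\rceil$ already suffices. Finally, since $P_{M}$ is the average power of a policy that is both feasible (Theorem~1) and deterministic (Theorem~2), this shows the optimal average power over all probabilistic stationary policies can be approached to within any multiplicative factor $1+\epsilon$ by deterministic policies, establishing the asymptotic optimality of deterministic scheduling.
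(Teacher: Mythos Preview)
Your proposal is correct and follows essentially the same route as the paper: invoke Lemma~7, divide the two-sided inequality by $P^{*}$, and then use $M\geq m=\tfrac{h_{max}-h_{min}}{\epsilon h_{min}}$ to force $\tfrac{h_{max}-h_{min}}{M h_{min}}\leq\epsilon$. The paper's proof is terser and simply takes $P^{*}>0$ for granted; your added justification of $P^{*}>0$ and the remark that $m$ need not be an integer are welcome clarifications but do not change the argument.
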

 \begin{proof}
 The proof of Theorem 3 is based on Lemma 7. Since  $P_{M}\geq P^{*}$, we can obtain $\frac{P_{M}}{P^{*}}\geq 1$. Based on Eq. (34), we have $\frac{P_{M}}{P^{*}}\leq (1+\frac{h_{max}-h_{min}}{M h_{min}})$.
 Since $M\geq \frac{h_{max}-h_{min}}{\epsilon h_{min}}$, we have $\frac{P_{M}}{P^{*}}\leq 1+\epsilon$.

  \end{proof}
\section{ Numerical Results }
In this section, we adopt Constrained Markov Decision Process (CMDP) to demonstrate the optimality of deterministic policies. More specifically, the continuous state-space of channel $\mathcal{H}$ can be equally divided into multiple subspaces. By judging which subspace the current channel state belongs to, we convert the continuous channel state into the discrete channel state for utilization. Based on this, the optimal scheduling can be approximately converted to CMDP. In the simulations, we set $A=2$, $S_{max}=2$, $\alpha_{0}=0.4$, $\alpha_{1}=0.3$, $\alpha_{2}=0.3$, $Q=10$, $\xi(s)=2^{s}-1$, $\mathcal{H}=(0.5, 10]$, and $ f_{H} (h) $ is a uniform distribution on $\mathcal{H}$.

 In Fig. 2, delay-power tradeoff curves are obtained by CMDP with $\mathcal{H}$ being equally divided into 2, 4, 8, 16 subspaces respectively. As we can see, when $\mathcal{H}$ is more densely divided, the delay-power tradeoff curve becomes better and converges asymptotically. The reason is that when the state-space of channel $\mathcal{H}$ is divided sufficiently densely, the discrete channel state information can be approximately equal to continuous channel state information. Therefore, the curve converges to the optimal delay-power tradeoff curve.

 More importantly, the delay-power tradeoff curve achieved by the scheduling based on discrete system state is piecewise linear. The policy corresponding to the vertex on this piecewise linear curve is the deterministic policy which is proofed in [11]. As we can see, when the state space of channel $\mathcal{H}$ is divided into 2 channel states, the distance between two adjacent vertices is shown to be 0.4944. When $\mathcal{H}$ is divided into 16 channel states, the distance decreases to 0.0753. It indicates that if $\mathcal{H}$ is divided more densely, the distance between two adjacent vertices will decrease. Therefore, when $\mathcal{H}$ is divided sufficiently densely, for any point on the delay-power tradeoff curve, we can always find a vertex on the curve to approximate it with sufficient small error, which exactly demonstrates the optimality of deterministic policies.
\section{Conclusion}
In this paper, we studied the optimal cross-layer scheduling based on the continuous channel state. More specifically, a variational problem was formulated to minimize average power for data transmission with the constraint of average delay. Based on this, we constructed a solution within the feasible region of the variational problem. The scheduling policy corresponding to the constructed solution was shown to be deterministic. Meanwhile, we show that the average power of the constructed solution can approximate that of the optimal solution with an arbitrarily small error. Therefore, the optimal scheduling policy could be found in the deterministic policies.

  \begin{figure}
   \centering
   \includegraphics[width=0.5\textwidth]{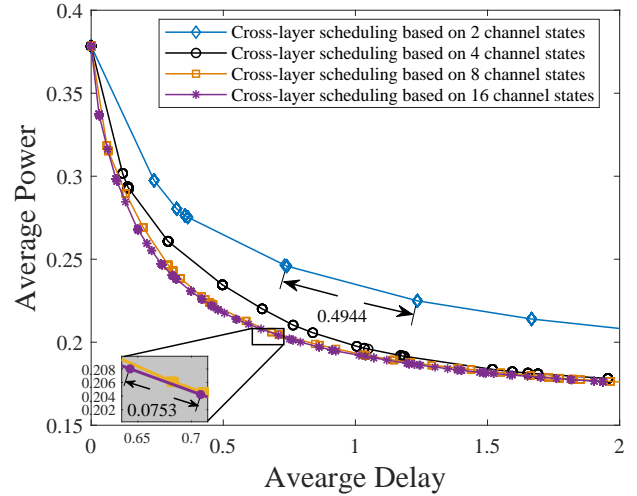}
   \caption{The delay-power tradeoff curves with discrete channel states.}
   \label{fig.pathdemo4}
   \end{figure}
\small

\newpage

\end{document}